\long\def\ca#1\cb{} 
\newcommand{\ket}[1]{|#1\rangle}               
\newcommand{\bra}[1]{\langle #1|}              
\newcommand{\dya}[1]{\ket{#1}\!\bra{#1}}
\newcommand{\ip}[2]{\langle #1|#2\rangle}      
\newcommand{\rank}{\text{rank}}
\newcommand{\LL}{\text{L}}
\newcommand{\HS}{\text{HS}}
\newcommand{\Tr}{{\rm Tr}}
\renewcommand{\geq}{\geqslant}
\renewcommand{\leq}{\leqslant}
\newcommand{\ad}{^\dagger}
\newcommand*{\id}{\openone}
\newtheorem{theorem}{Theorem}
\newtheorem{lemma}{Lemma}
\newtheorem{proposition}{Proposition}
\newtheorem{example}{Example}
\begin{document}
\title{Strong bound between trace distance and Hilbert-Schmidt distance for low-rank states}
\author{Patrick J. Coles}
\affiliation{Theoretical Division, Los Alamos National Laboratory, Los Alamos, NM 87545, USA}
\author{M. Cerezo}
\affiliation{Theoretical Division, Los Alamos National Laboratory, Los Alamos, NM 87545, USA}
\author{Lukasz Cincio}
\affiliation{Theoretical Division, Los Alamos National Laboratory, Los Alamos, NM 87545, USA}

\begin{abstract}
The trace distance between two quantum states, $\rho$ and $\sigma$, is an operationally meaningful quantity in quantum information theory. However, in general it is difficult to compute, involving the diagonalization of $\rho - \sigma$. In contrast, the Hilbert-Schmidt distance can be computed without diagonalization, although it is less operationally significant. Here, we relate the trace distance and the Hilbert-Schmidt distance with a bound that is particularly strong when either $\rho$ or $\sigma$ is low rank. Our bound is stronger than the bound one could obtain via the norm equivalence of the Frobenius and trace norms. We also consider bounds that are useful not only for low-rank states but also for low-entropy states. Our results have relevance to quantum information theory, quantum algorithms design, and quantum complexity theory.
\end{abstract}

\maketitle

\section{Introduction}

The trace distance is employed in the definition of security in quantum cryptography \cite{horodecki2005,renner2005}, is related to the error probability in quantum state discrimination \cite{helstrom1969}, and generally has operational relevance to many quantum information protocols \cite{nielsen2010,wilde2017}. It is defined by
\begin{equation}
D(\rho,\sigma)=\frac{1}{2}\Tr|\rho-\sigma| = \frac{1}{2} \|\rho - \sigma\|_1\,,
\end{equation}
where $\|M\|_1 = \Tr\sqrt{M\ad M}$ is called the 1-norm or trace norm. Computing the trace distance may, in general, involve diagonalizing the matrix $\Delta = \rho - \sigma$, which can be exponentially difficult as a result of the exponentially large dimension of the density matrix representation.

An alternative metric is the Hilbert-Schmidt distance:
\begin{equation}
D_{\HS}(\rho,\sigma)=\Tr[(\rho-\sigma)^2] =  \|\rho - \sigma\|_2^2\,,
\end{equation}
where $\|M\|_2 = \sqrt{\Tr (M\ad M)}$ is called the 2-norm or Frobenius norm. While the Hilbert-Schmidt distance does not share the operational relevance of the trace distance \cite{ozawa2000entanglement}, it has the benefit that one can compute it without doing matrix diagonalization. 
Moreover, if $\rho$ and $\sigma$ are in a quantum form (i.e. were prepared on a quantum computer) it is known that their Hilbert-Schmidt distance is efficiently computable (logarithmic in matrix dimension) on a quantum computer \cite{garcia2013swap,cincio2018learning}. Because of the latter, the Hilbert-Schmidt distance is employed as a cost function in recent variational hybrid quantum-classical algorithms \cite{larose2018,arrasmith2018, Khatri_LaRose_Poremba_Cincio_Sornborger_Coles_2018,cerezo2019variational}.

In many applications, one is interested in upper-bounding the trace distance between two states, i.e., showing that $D(\rho,\sigma)\leq \epsilon$ for some small number $\epsilon$. Hence, in this work, we explore upper bounds on $D(\rho,\sigma)$, and in particular, upper bounds formulated in terms of $D_{\HS}(\rho,\sigma)$, since the latter may be computable.

For example, the norm equivalence \cite{van1983matrix} of the trace norm and Frobenius norm can be written as
\begin{equation}
\label{eqnNormEQ0}
\|M\|_2 \leq \|M\|_1 \leq \sqrt{\rank(M)}\|M\|_2\,,
\end{equation}
where the first inequality follows from the monotonicity of the Schatten norms \cite{raissouli2010various,horn1990matrix} and the second one follows from the Cauchy-Schwartz inequality. Applying this to $M =\Delta$ gives
\begin{equation}
\label{eqnNormEQ1}
(1/4)D_{\HS}(\rho,\sigma)\leq D(\rho,\sigma)^2 \leq (d/4)\cdot D_{\HS}(\rho,\sigma)\,,
\end{equation}
where $d$ is the Hilbert space dimension, and we used the inequality $\rank(\Delta)\leq d$, which is the best one can do without any prior knowledge about $\Delta$. As $d$ grows exponentially in the number of quantum subsystems, $d$ can be very large, making the upper bound in \eqref{eqnNormEQ1} very weak. Hence it is natural to ask whether the upper bound in \eqref{eqnNormEQ1} can be tightened for certain states $\rho$ and $\sigma$.

The goal of this article is to explore possible tightenings of this upper bound when $\rho$ and/or $\sigma$ are low-rank states or low-entropy states. Such tightening would be expected since the extreme case of pure states, $\rho = \dya{\psi}$ and $\sigma = \dya{\phi}$, gives $D(\rho,\sigma)^2 = 1-|\ip{\psi}{\phi}|^2 $ and $D_{\HS}(\rho,\sigma) = 2-2|\ip{\psi}{\phi}|^2 $. Hence for pure states we have
\begin{equation}
\label{eqnPureStates}
D(\rho,\sigma)^2 = (1/2)D_{\HS}(\rho,\sigma)\,, \end{equation}
and no dimension-dependent factor appears here.

Our main result allows us to write the bound
\begin{equation}
\label{eqnMainResultIntro}
    (1/2)D_{\HS}(\rho,\sigma)\leq D(\rho,\sigma)^2 \leq  R\cdot D_{\HS}(\rho,\sigma)\,,
\end{equation}
where $R= \rank(\rho)\rank(\sigma)/[\rank(\rho)+\rank(\sigma)]$. The slight tightening of the lower bound 
relative to \eqref{eqnNormEQ1} is straightforward to show~\cite{coles2012unification}, and hence our contribution here is the upper bound in \eqref{eqnMainResultIntro}. This implies that, even when one state is full rank, if the other state is low rank, then the Hilbert-Schmidt distance is of the same order-of-magnitude as the square of the trace distance. Under these conditions, one can use the (easy-to-calculate) Hilbert-Schmidt distance as a surrogate for the (operationally meaningful) trace distance.

The applications of this result include:
\begin{enumerate}
    \item In quantum information protocols, such as quantum cryptographic protocols \cite{horodecki2005,renner2005}, the trace distance is often employed as an operational figure-of-merit. The Hilbert-Schmidt distance is easier to calculate both analytically and numerically and hence can be used to upper bound the trace distance for these protocols, provided one has a low-rank guarantee.
    \item The Hilbert-Schmidt distance is often employed as a cost function in variational hybrid quantum-classical algorithms (often with the low-rank assumption already made) \cite{larose2018,arrasmith2018, Khatri_LaRose_Poremba_Cincio_Sornborger_Coles_2018,cerezo2019variational}, since it can be efficiently computed on a quantum computer. Because of our bound, minimizing the cost in these algorithms implies that one is also minimizing a function of the trace distance. Hence our bound gives further justification to these algorithms.
    \item The trace distance plays an important role in quantum complexity theory, since the problem of deciding whether the trace distance is large or small is QSZK-complete 
    (QSZK = quantum statistical zero knowledge) \cite{watrous2002quantum}. QSZK is a complexity class that contains BQP (BQP = bounded-error quantum polynomial time),
    $$\text{BQP}\subseteq \text{QSZK}\,.$$ 
    It is generally believed that BQP is not equal to QSZK, and hence deciding if two states are close or far in trace distance 
    cannot (in general) be efficiently done on a quantum computer \cite{watrous2002quantum}. If there exist states (e.g., low-rank states) for which the trace distance is approximately equal to a quantity that can be efficiently computed on a quantum computer (i.e., Hilbert-Schmidt distance), then this would have implications in quantum complexity theory. (See Conclusions for elaboration.)
\end{enumerate}

In what follows, we first discuss tightenings of \eqref{eqnNormEQ1} for low-rank states, including our main result of an upper bound that is stronger than what one could obtain via norm equivalence. Then we explore some bounds for the more general scenario of low-entropy states.

\section{Rank-based bounds}

Consider the following useful lemma.
\begin{lemma}
\label{lemma1}
Let $\Delta = \rho - \sigma = \Delta_+ - \Delta_-$. Here $\Delta_+$ and $\Delta_-$ respectively correspond to the positive and negative part of $\Delta$, with $\Delta_+\geq0$, $\Delta_-\geq0$, and $\Delta_+\Delta_-=0$. Then we have:
\begin{align}
\rank(\Delta_+) \leq \rank(\rho)\,,\\
\rank(\Delta_-) \leq \rank(\sigma)\,.
\end{align}
\end{lemma}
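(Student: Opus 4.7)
The plan is to reduce the rank bound to a dimension-counting exercise by showing that $\supp(\Delta_+)\cap\ker(\rho)=\{0\}$. Once that is established, the inclusion-exclusion identity $\dim(U+V)=\dim U+\dim V-\dim(U\cap V)\leq d$ (applied with $U=\supp(\Delta_+)$, $V=\ker(\rho)$) immediately yields $\rank(\Delta_+)=\dim\supp(\Delta_+)\leq d-\dim\ker(\rho)=\rank(\rho)$. The second inequality in the lemma will follow by the symmetric argument with $\rho$ replaced by $\sigma$ and $\Delta_+$ replaced by $\Delta_-$ (equivalently, by applying the first inequality to $-\Delta=\sigma-\rho$).

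To prove the trivial intersection, I would argue by contradiction. Suppose there is a unit vector $\ket{v}\in\supp(\Delta_+)\cap\ker(\rho)$, and compute the quadratic form $\bra{v}\Delta\ket{v}$ in two ways. On the one hand, the orthogonal-support condition $\Delta_+\Delta_-=0$ implies that $\supp(\Delta_+)$ is precisely the strictly positive eigenspace of the Hermitian operator $\Delta$, so $\bra{v}\Delta\ket{v}=\bra{v}\Delta_+\ket{v}>0$. On the other hand, since $\rho\ket{v}=0$ and $\sigma\geq 0$, one has $\bra{v}\Delta\ket{v}=-\bra{v}\sigma\ket{v}\leq 0$. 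These two conclusions are contradictory, so the intersection must be trivial.

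I do not anticipate a serious obstacle here. The one point requiring a bit of care is the identification of $\supp(\Delta_+)$ with the strictly positive eigenspace of $\Delta$; this is justified by diagonalizing $\Delta$ and invoking the uniqueness of the positive/negative part decomposition under the stated conditions $\Delta_+,\Delta_-\geq 0$ and $\Delta_+\Delta_-=0$. All remaining steps are elementary linear algebra, and the argument for $\Delta_-$ is a verbatim repetition with the roles of $\rho$ and $\sigma$ swapped.
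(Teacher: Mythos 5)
Your proof is correct, but it takes a genuinely different route from the paper's. The paper applies Weyl's eigenvalue inequality to $\rho = \Delta + \sigma$ to get the ordered-eigenvalue comparison $r_j \geq \delta_j$ for all $j$ (using $\sigma \geq 0$ to drop the smallest eigenvalue of $\sigma$), and then reads off the rank bound from the observation that $r_j = 0$ forces $\delta_j \leq 0$. You instead prove the subspace statement $\supp(\Delta_+)\cap\ker(\rho)=\{0\}$ via the quadratic-form contradiction $0 < \bra{v}\Delta_+\ket{v} = \bra{v}\Delta\ket{v} = -\bra{v}\sigma\ket{v} \leq 0$, and conclude by dimension counting; your identification of $\supp(\Delta_+)$ with the strictly positive eigenspace of $\Delta$ is justified exactly as you say, since $\Delta_+\Delta_-=0$ forces orthogonal supports and hence uniqueness of the Jordan decomposition. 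Your argument is more elementary and self-contained (no appeal to Weyl's theorem), and it is arguably the cleaner proof of the lemma in isolation. What the paper's route buys is the stronger intermediate fact $r_j \geq \delta_j^{+}$ (and $s_j \geq \delta_j^{-}$), which is reused verbatim in the proofs of Propositions~\ref{prop2} and \ref{prop3}; your subspace argument yields the rank inequality but not that eigenvalue-wise domination, so it would not substitute for Weyl's inequality later in the paper.
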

\begin{proof}
Let $\{r_j\}$, $\{s_j\}$, and $\{\delta_j\}$ respectively denote the eigenvalues of $\rho$, $\sigma$, and $\Delta$, where the eigenvalues in each set are ordered in decreasing order. Weyl's inequality \cite{horn1990matrix,weyl1912asymptotische} applied to $\rho = \Delta + \sigma$ gives
\begin{equation}
    r_j \geq \delta_j + s_d \,,\quad\forall j\,.
\end{equation}
Because $\sigma\geq 0$, we have $s_d\geq 0$, and hence
\begin{equation}
    r_j \geq \delta_j\,,\quad\forall j\,.
\end{equation}
Since the eigenvalues of $\rho$ are bigger than the eigenvalues of $\Delta$, then when $r_j =0$ we must have $\delta_j \leq 0$. This implies that $\rank(\rho) \geq \rank(\Delta_+)$\,.

Similarly, we can define $\overline{\Delta} =  \sigma - \rho =  \Delta_- - \Delta_+$, where $\Delta_-$ corresponds to the positive part of $\overline{\Delta}$. Writing $\sigma = \overline{\Delta} + \rho$, and applying Weyl's inequality gives 
\begin{equation}
    s_j \geq \overline{\delta}_j\,,\quad\forall j\,,
\end{equation}
where $\{\overline{\delta}_j\}$ are the ordered eigenvalues of $\overline{\Delta}$. Since the eigenvalues of $\sigma$ are bigger than those of $\overline{\Delta}$, then when $s_j =0$ we must have $\overline{\delta}_j \leq 0$, implying $\rank(\sigma) \geq \rank(\Delta_-)$\,.
\end{proof}

Combining this lemma with the norm equivalence in \eqref{eqnNormEQ0} gives the following result.

\begin{proposition}
For any two quantum states $\rho$ and $\sigma$,
\begin{equation}
\label{eqnBound1}
D(\rho,\sigma)^2 \leq \left(\frac{\rank(\rho) + \rank(\sigma)}{4}\right)D_{\HS}(\rho,\sigma)\,.
\end{equation}
\end{proposition}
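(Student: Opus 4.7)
The plan is to bound the rank of $\Delta = \rho-\sigma$ in terms of $\rank(\rho)+\rank(\sigma)$ using Lemma~\ref{lemma1}, and then to apply the second inequality of the norm equivalence \eqref{eqnNormEQ0} directly to $M=\Delta$. Nothing beyond these two ingredients is required.

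First I would observe that $\Delta_+$ and $\Delta_-$ have mutually orthogonal supports by construction (since $\Delta_+\Delta_- = 0$), so rank is additive: $\rank(\Delta) = \rank(\Delta_+)+\rank(\Delta_-)$. Combining this with the two estimates from Lemma~\ref{lemma1} gives $\rank(\Delta) \leq \rank(\rho)+\rank(\sigma)$. Applying \eqref{eqnNormEQ0} with $M=\Delta$ then yields
\begin{equation}
\|\Delta\|_1 \leq \sqrt{\rank(\Delta)}\,\|\Delta\|_2 \leq \sqrt{\rank(\rho)+\rank(\sigma)}\,\|\Delta\|_2.
\end{equation}
Squaring both sides and dividing by $4$ converts the trace and Frobenius norms into $D(\rho,\sigma)^2$ and $D_{\HS}(\rho,\sigma)$ respectively, which is exactly the claim \eqref{eqnBound1}.

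There is no real obstacle here; the proposition is essentially a one-liner once Lemma~\ref{lemma1} is in hand. The only point warranting care is the rank-additivity $\rank(\Delta) = \rank(\Delta_+)+\rank(\Delta_-)$, which uses orthogonality of supports rather than mere positivity of each piece. I expect the more substantive work lies in sharpening this to the stronger bound $R = \rank(\rho)\rank(\sigma)/[\rank(\rho)+\rank(\sigma)]$ advertised in \eqref{eqnMainResultIntro}: that refinement would presumably require estimating $\|\Delta_+\|_1$ and $\|\Delta_-\|_1$ separately via \eqref{eqnNormEQ0} and Lemma~\ref{lemma1} and then recombining them with a weighted Cauchy--Schwarz step tuned to minimize the resulting coefficient, rather than the single-shot estimate that suffices here.
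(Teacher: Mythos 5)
Your proof is correct and follows essentially the same route as the paper: use Lemma~\ref{lemma1} together with $\rank(\Delta)=\rank(\Delta_+)+\rank(\Delta_-)$ (orthogonal supports) to get $\rank(\Delta)\leq\rank(\rho)+\rank(\sigma)$, then apply the upper bound of \eqref{eqnNormEQ0} to $M=\Delta$ and square. The paper additionally remarks that the rank estimate also follows from subadditivity of rank, but the argument is otherwise identical.
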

\begin{proof}
From Lemma~\ref{lemma1}, we have
\begin{align}
\rank(\Delta) &= \rank(\Delta_+)+\rank(\Delta_-)\\
&\leq \rank(\rho)+\rank(\sigma)\,.
\end{align}
We remark that this inequality can also be obtained from the fact that the rank is subadditive, i.e., $\rank(A+B) \leq \rank(A) + \rank(B)$. The upper bound in \eqref{eqnNormEQ0} then gives:
\begin{align}
4D(\rho,\sigma)^2 &\leq \rank(\Delta)D_{\HS}(\rho,\sigma)\\
&\leq (\rank(\rho)+\rank(\sigma))D_{\HS}(\rho,\sigma)\,.
\end{align}
\end{proof}

Equation~\eqref{eqnBound1} gets rid of the dimension-dependent factor in \eqref{eqnNormEQ1} and replaces it with a rank-based factor. This can potentially improve the bound, and indeed one can see from \eqref{eqnPureStates} that \eqref{eqnBound1} is tight when both $\rho$ and $\sigma$ are pure. On the other hand, there are many other cases where \eqref{eqnBound1} is loose. For example, as we will soon see from our main result below, the following inequality holds when, say, $\rho$ is pure but $\sigma$ is an arbitrary state (i.e., when only one of the two states is pure):
\begin{equation}
\label{eqnPureRho}
D(\rho,\sigma)^2 \leq D_{\HS}(\rho,\sigma)\,.
\end{equation}
In this special case, $\rank(\sigma)$ can grow in proportion to the dimension $d$ and hence the bound in \eqref{eqnBound1} can be extremely loose.

The looseness of \eqref{eqnBound1} motivates looking for a tightening of the bound that is not based on the norm equivalence in \eqref{eqnNormEQ0}. Indeed, that is what we do in the following theorem, which is our main result.

\begin{theorem}
\label{thm1}
For any two quantum states $\rho$ and $\sigma$,
\begin{equation}
\label{eqnBound2}
D(\rho,\sigma)^2 \leq R\cdot D_{\HS}(\rho,\sigma)\,,
\end{equation}
where we refer to $R$ as the reduced rank (defined analogously to the reduced mass in physics):
\begin{equation}
\label{eqnReducedRank}
R = \frac{\rank(\rho)\rank(\sigma)}{\rank(\rho) + \rank(\sigma)}\,.
\end{equation}
\end{theorem}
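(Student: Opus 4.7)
The plan is to avoid applying the norm equivalence \eqref{eqnNormEQ0} to $\Delta$ as a whole, and instead to split $\Delta$ into its positive and negative parts and bound each separately, taking advantage of the fact that the orthogonality of their supports decouples both the 1-norm and the 2-norm contributions.

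First I would write $\Delta = \Delta_+ - \Delta_-$ as in Lemma~\ref{lemma1}. Because $\Tr\rho = \Tr\sigma = 1$, we have $\Tr\Delta = 0$, so $\Tr\Delta_+ = \Tr\Delta_- =: t$. Then
\begin{equation}
D(\rho,\sigma) = \tfrac{1}{2}\|\Delta\|_1 = \tfrac{1}{2}(\Tr\Delta_+ + \Tr\Delta_-) = t,
\end{equation}
and, using $\Delta_+\Delta_- = 0$,
\begin{equation}
D_{\HS}(\rho,\sigma) = \Tr\Delta^2 = \Tr\Delta_+^2 + \Tr\Delta_-^2 .
\end{equation}

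Next I would apply the Cauchy--Schwarz inequality (equivalently, the lower bound in \eqref{eqnNormEQ0}) to $\Delta_+$ and $\Delta_-$ individually. For any positive semidefinite matrix $M$, $(\Tr M)^2 \leq \rank(M)\,\Tr M^2$. Hence
\begin{equation}
\Tr\Delta_+^2 \geq \frac{t^2}{\rank(\Delta_+)},\qquad \Tr\Delta_-^2 \geq \frac{t^2}{\rank(\Delta_-)} .
\end{equation}
Invoking Lemma~\ref{lemma1} to replace $\rank(\Delta_+)$ by $\rank(\rho)$ and $\rank(\Delta_-)$ by $\rank(\sigma)$ (both of which only weaken the lower bound), I get
\begin{equation}
D_{\HS}(\rho,\sigma) \geq t^2\left(\frac{1}{\rank(\rho)} + \frac{1}{\rank(\sigma)}\right) = \frac{D(\rho,\sigma)^2}{R},
\end{equation}
which rearranges to \eqref{eqnBound2}.

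The only step that requires any care is the decision to bound $\Tr\Delta_+^2$ and $\Tr\Delta_-^2$ separately rather than lumping them together as $\Tr\Delta^2$. Lumping forces one to use $\rank(\Delta)\leq\rank(\rho)+\rank(\sigma)$ and yields only the weaker Proposition~\eqref{eqnBound1}. The separation is what converts the sum $\rank(\rho)+\rank(\sigma)$ into the harmonic-mean-like quantity $R$, and this is precisely why the bound becomes dimension-independent when one state is pure: if, say, $\rank(\rho)=1$, then $R \leq 1$ no matter how large $\rank(\sigma)$ is, recovering \eqref{eqnPureRho}.
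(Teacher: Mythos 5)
Your proposal is correct and follows essentially the same route as the paper: the paper also decomposes $\Delta$ into $\Delta_+$ and $\Delta_-$, applies the purity--rank inequality (your Cauchy--Schwarz step, phrased there as $\Tr(\tau_\pm^2)\geq 1/\rank(\Delta_\pm)$ for the normalized states $\tau_\pm = \Delta_\pm/\Tr\Delta_\pm$), invokes Lemma~\ref{lemma1}, and sums the two resulting inequalities. The only cosmetic difference is that you apply $(\Tr M)^2 \leq \rank(M)\Tr(M^2)$ directly to $\Delta_\pm$ rather than first normalizing them to density matrices.
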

\begin{proof}
Let $\tau_+ = \Delta_+ / \Tr(\Delta_+)$ and $\tau_- = \Delta_- / \Tr(\Delta_-)$\,. Note that both $\tau_+$ and $\tau_-$ are density matrices (positive semidefinite with trace one). Since the purity of a density matrix is lower bounded by the inverse of its rank, we have
\begin{align}
\Tr(\tau_+^2) \geq 1/\rank(\Delta_+) \geq 1/\rank(\rho)\,,\\
\Tr(\tau_-^2) \geq 1/\rank(\Delta_-) \geq 1/\rank(\sigma)\,,
\end{align}
where we have employed Lemma~\ref{lemma1}. Using the definitions of $\tau_+$ and $\tau_-$, we have
\begin{align}
\Tr(\Delta_+^2) \geq \Tr(\Delta_+)^2 /\rank(\rho)\,,\\
\Tr(\Delta_-^2) \geq \Tr(\Delta_-)^2/\rank(\sigma)\,.
\end{align}
Summing these two inequalities gives:
\begin{align}
\Tr(\Delta_+^2)+\Tr(\Delta_-^2) \geq \frac{\Tr(\Delta_+)^2}{\rank(\rho)}+\frac{\Tr(\Delta_-)^2}{\rank(\sigma)}\,.
\end{align}
The left-hand-side of this inequality is $D_{\HS}(\rho,\sigma)$, while $\Tr(\Delta_+) = \Tr(\Delta_-) = D(\rho,\sigma)$. Hence we have
\begin{align}
D_{\HS}(\rho,\sigma) \geq D(\rho,\sigma)^2\left(\frac{1}{\rank(\rho)}+\frac{1}{\rank(\sigma)}\right)\,,
\end{align}
which is equivalent to \eqref{eqnBound2}.
\end{proof}

Let us now show that \eqref{eqnBound2} is stronger than \eqref{eqnBound1}. First note that
\begin{equation}
2\rank(\rho)\rank(\sigma)\leq \rank(\rho)^2+\rank(\sigma)^2\,,    
\end{equation}
which implies that
\begin{equation}
4\rank(\rho)\rank(\sigma)\leq (\rank(\rho)+\rank(\sigma))^2\,.
\end{equation}
Dividing through by $4(\rank(\rho)+\rank(\sigma))$ gives
\begin{equation}
R \leq (\rank(\rho)+\rank(\sigma))/4\,,    
\end{equation}
which shows that \eqref{eqnBound2} implies \eqref{eqnBound1}.

We also remark that the reduced rank has the property
\begin{equation}
    R \leq \min \{\rank(\rho),\rank(\sigma)\}\,,
\end{equation}
and hence provides a stronger bound than simply taking the minimum of the two ranks.

Next let us consider some examples. For these examples, it will be helpful to define 
\begin {equation}
Q(\rho ,\sigma)  = D(\rho,\sigma)^2 / D_{\HS}(\rho,\sigma)\,,
\end{equation}
which is well-defined so long as $\rho \neq \sigma$.

\begin{example}
Consider the example noted above in \eqref{eqnPureRho}, where $\rho$ is pure. In this case, the reduced rank is
\begin{equation}
R = \frac{\rank(\sigma)}{1+\rank(\sigma)} \leq 1\,,    
\end{equation}
and hence \eqref{eqnBound2} implies \eqref{eqnPureRho}. This is a dramatic improvement compared to the bound in \eqref{eqnBound1}. 
\end{example}

\begin{example}
Suppose that $\rho = \Pi /r$ is proportional to a rank $r$ projector $\Pi$ and $\sigma = \id /d$ is maximally mixed. In this case, 
$$D(\rho,\sigma)= \frac{d-r}{d},\hspace{8pt} R=\frac{dr}{d+r},\hspace{8pt}D_{\HS}(\rho,\sigma)= \frac{d-r}{dr}\,.$$
This gives
\begin{equation}
Q(\rho, \sigma)=R(d^2-r^2) / d^2\,,    
\end{equation}
where the right-hand-side is approximately $R$ when $r \ll d$, and hence \eqref{eqnBound2} becomes tight in this limit.
\end{example}

\begin{example}
Suppose that $\rho = \Pi_r /r$ and $\sigma = \Pi_s / s$, where $\Pi_r$ and $\Pi_s$ are orthogonal projectors ($\Pi_r \Pi_s = 0$) and their ranks are $r$ and $s$, respectively. In this case, 
$$D(\rho,\sigma)= 1,\hspace{8pt} D_{\HS}(\rho,\sigma)= \frac{r+s}{rs},\hspace{8pt}Q(\rho, \sigma ) =R\,.$$
Hence, \eqref{eqnBound2} is perfectly tight for these states. Note that the orthogonality of $\rho$ and $\sigma$ in this example means that this example only applies to $R\leq d/4$.
\end{example}

Example~3 demonstrates that \eqref{eqnBound2} can be satisfied with equality for all values of $R$ in the range $1/2 \leq R \leq d/4$. For $R > d/4$, the upper bound in \eqref{eqnNormEQ1} is obviously tighter than \eqref{eqnBound2}. Therefore, in general, one can use the tighter of these two bounds:
\begin{equation}
\label{eqnQbound}
    1/2 \leq Q(\rho,\sigma) \leq \min\{ R, d/4\}\,.
\end{equation}

\begin{figure}[t]
    \centering
    \includegraphics[width=.75\columnwidth]{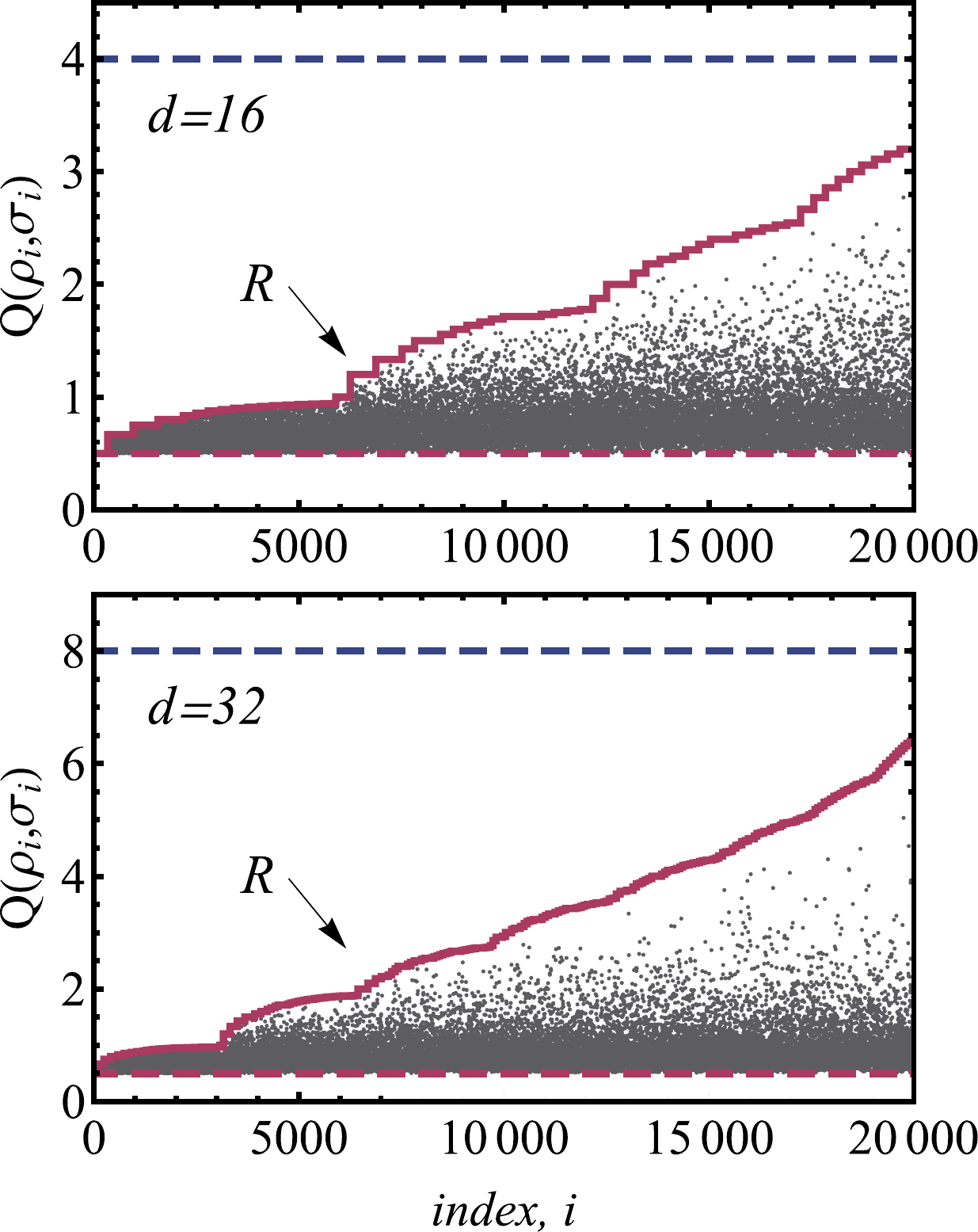}
    \caption{Bounds for $Q(\rho_i,\sigma_i)=D(\rho_i,\sigma_i)^2/D_{\HS}(\rho_i,\sigma_i)$ for $d=16$ (top) and $d=32$ (bottom). The gray dots correspond to $Q(\rho_i,\sigma_i)$ for random states $\sigma_i$ with uniformly distributed rank and purity and random states $\rho_i$ with bounded rank ($1\leq \rank(\rho)\leq d/4$). Results were ordered by increasing value of $R$, with the value of $R$ shown as the solid red curve. The dashed blue line is the alternative upper bound $Q(\rho,\sigma)\leq d/4$, obtained from the norm equivalence, while the dashed red line is the lower bound  $1/2\leq Q(\rho,\sigma)$ from \eqref{eqnMainResultIntro}. Over this range of $R$ values, our bound $Q(\rho,\sigma)\leq R$ can be saturated and is tighter than the bound from norm equivalence.}
    \label{fig:1}
\end{figure}

Figure \ref{fig:1} shows representative numerical results comparing \eqref{eqnBound2} with the upper bound in \eqref{eqnNormEQ1}. In particular we plot $Q(\rho,\sigma)$
for $20000$ random states $\sigma$ (with uniformly distributed rank and purity) and random states $\rho$ with $1\leq \rank(\rho)\leq d/4$, for $d=16,32$. Results are ordered by increasing value of $R$. For all points, our bound is tighter than the one obtained from norm equivalence. Furthermore, one can see that our bound can be saturated over this range of $R$ values.

\section{Entropy-based bounds}

The rank of a matrix is an abruptly changing function of its eigenvalues. So it makes sense to consider a smoother function such as entropy. While there are many such entropy functions, we consider the linear entropy
\begin{equation}
    S_{\LL}(\rho) = 1-\Tr(\rho^2)
\end{equation}
for two reasons: it is the natural entropy to associate with the Hilbert-Schmidt distance, and it is efficient to compute. This is in the spirit that we would like to upper-bound the trace distance with a quantity that is easily computable.

Consider the following bound.

\begin{proposition}
\label{prop2}
For any two quantum states $\rho$ and $\sigma$,
\begin{equation}
\label{eqnBound3}
    D(\rho,\sigma)^2\leq  \frac{1}{2}[D_{\HS}(\rho,\sigma)+S_{\LL}(\rho)+S_{\LL}(\sigma)]\,.
\end{equation}
\end{proposition}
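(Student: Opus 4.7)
The plan is to first reduce the right-hand side to a simple, familiar quantity and then invoke a pair of standard inequalities involving the fidelity. Expanding $D_{\HS}(\rho,\sigma) = \Tr(\rho^2) - 2\Tr(\rho\sigma) + \Tr(\sigma^2)$ and writing $S_{\LL}(\rho) + S_{\LL}(\sigma) = 2 - \Tr(\rho^2) - \Tr(\sigma^2)$, the $\Tr(\rho^2)$ and $\Tr(\sigma^2)$ terms cancel, so that
\begin{equation}
\frac{1}{2}\bigl[D_{\HS}(\rho,\sigma) + S_{\LL}(\rho) + S_{\LL}(\sigma)\bigr] = 1 - \Tr(\rho\sigma).
\end{equation}
Hence the proposition is equivalent to the clean bound $D(\rho,\sigma)^2 \leq 1 - \Tr(\rho\sigma)$, and the rest of the argument targets this reformulation.

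Next, I would recall the Fuchs--van de Graaf inequality, $D(\rho,\sigma) \leq \sqrt{1 - F(\rho,\sigma)^2}$, where $F(\rho,\sigma) = \|\sqrt{\rho}\sqrt{\sigma}\|_1$ is the fidelity. Squaring gives $D(\rho,\sigma)^2 \leq 1 - F(\rho,\sigma)^2$. The task then reduces to showing $F(\rho,\sigma)^2 \geq \Tr(\rho\sigma)$, at which point transitivity closes the chain.

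For that last step I would apply monotonicity of the Schatten norms to the operator $M = \sqrt{\rho}\sqrt{\sigma}$: since $\|M\|_2 \leq \|M\|_1$, we have
\begin{equation}
\Tr(\rho\sigma) = \Tr\bigl(\sqrt{\rho}\sqrt{\sigma}\sqrt{\sigma}\sqrt{\rho}\bigr) = \|\sqrt{\rho}\sqrt{\sigma}\|_2^2 \leq \|\sqrt{\rho}\sqrt{\sigma}\|_1^2 = F(\rho,\sigma)^2.
\end{equation}
Chaining this with the squared Fuchs--van de Graaf bound yields $D(\rho,\sigma)^2 \leq 1 - F(\rho,\sigma)^2 \leq 1 - \Tr(\rho\sigma)$, which is the reformulated target.

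There is no real obstacle in this proof: the main conceptual step is recognizing that $\Tr(\rho\sigma)$ is the squared Frobenius norm of $\sqrt{\rho}\sqrt{\sigma}$, so that norm monotonicity interpolates between $\Tr(\rho\sigma)$ and the fidelity. Everything else is algebraic rearrangement or quotation of well-known inequalities. If one wanted to avoid invoking Fuchs--van de Graaf as a black box, the only slightly subtle step would be that derivation, but since fidelity machinery is standard in quantum information I would simply cite it.
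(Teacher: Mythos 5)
Your proof is correct, but it takes a genuinely different route from the paper's. The paper works directly with the Jordan decomposition $\Delta = \Delta_+ - \Delta_-$, expands $D(\rho,\sigma)^2 = \frac{1}{2}[\Tr(\Delta_+)^2 + \Tr(\Delta_-)^2]$ in the eigenvalues of $\Delta_\pm$, and controls the cross terms via Weyl's inequality ($r_j \geq \delta_j^+$, $s_j \geq \delta_j^-$), the same elementary machinery used in Lemma~1 and Proposition~3. You instead observe that the right-hand side collapses algebraically to $1 - \Tr(\rho\sigma)$, so the proposition is equivalent to $D(\rho,\sigma)^2 \leq 1 - \Tr(\rho\sigma)$, and you obtain this by chaining the Fuchs--van de Graaf upper bound $D \leq \sqrt{1 - F^2}$ with $F(\rho,\sigma)^2 = \|\sqrt{\rho}\sqrt{\sigma}\|_1^2 \geq \|\sqrt{\rho}\sqrt{\sigma}\|_2^2 = \Tr(\rho\sigma)$; every step checks out. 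Your reformulation is a worthwhile observation in its own right: it makes the bound's content transparent (e.g., for pure states it reads $D^2 \leq 1 - |\ip{\psi}{\phi}|^2$, which is an equality, explaining the tightness noted after the proposition) and yields a two-line proof from standard fidelity results. What the paper's argument buys in exchange is self-containedness and uniformity --- the same Weyl-based eigenvalue bookkeeping carries over essentially unchanged to Proposition~3, whose asymmetric bound $D^2 \leq D_{\HS} + S_{\LL}(\rho)$ does not admit an equally clean fidelity reformulation.
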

\begin{proof}
Let $\{\delta^{+}_j\}$ and $\{\delta^{-}_j\}$ respectively be the eigenvalues of $\Delta_+$ and $\Delta_-$. Since $D(\rho,\sigma) =\Tr(\Delta_+) = \Tr(\Delta_-)$, we can write
\begin{align}
    D(\rho,\sigma)^2 &= \frac{1}{2}[\Tr(\Delta_+)^2+\Tr(\Delta_-)^2]\\
    &=\frac{1}{2}\left[\sum_{jk}\delta^{+}_j\delta^{+}_k+\sum_{jk}\delta^{-}_j\delta^{-}_k\right]\\
    &=\frac{1}{2}\left[D_{\HS}(\rho,\sigma)+\sum_{k\neq j}\delta^{+}_j\delta^{+}_k+\sum_{k\neq j}\delta^{-}_j\delta^{-}_k\right]\,,
\end{align}
where we used the fact that $D_{\HS}(\rho,\sigma) = \sum_j [(\delta^{+}_j)^2+(\delta^{-}_j)^2]$. Next we use Weyl's inequality (see proof of Lemma~\ref{lemma1}), which implies that:
\begin{align}
\label{eqnProof122}
r_j &\geq \delta^{+}_j,\quad\forall j\\
s_j &\geq \delta^{-}_j,\quad\forall j\,,
\end{align}
where we assume the eigenvalues are ordered in decreasing order. Since these eigenvalues are non-negative, we have
\begin{align}
\label{eqnProof123}
    D(\rho,\sigma)^2 &\leq \frac{1}{2}\left[D_{\HS}(\rho,\sigma)+\sum_{k\neq j}r_j r_k+\sum_{k\neq j}s_j s_k\right]\,.
\end{align}
Finally, note that
\begin{align}
\label{eqnProof124}
S_{\LL}(\rho) = (\Tr(\rho))^2 - \Tr(\rho^2) = \sum_{k\neq j}r_j r_k,
\end{align}
and similarly $S_{\LL}(\sigma) = \sum_{k\neq j}s_j s_k $. Plugging these relations into \eqref{eqnProof123} proves the result.
\end{proof}

Equation~\eqref{eqnBound3} is tight when both $\rho$ and $\sigma$ are pure, since $S_{\LL}(\rho)=S_{\LL}(\sigma)=0$ in this case. On the other hand, when only $\rho$ is pure, the bound in \eqref{eqnBound3} becomes $(1/2)[D_{HS}(\rho,\sigma)+S_{\LL}(\sigma)]$, which does not reduce to \eqref{eqnPureRho}. In this case, \eqref{eqnBound3} can be either stronger or weaker than \eqref{eqnPureRho}.

The following is an alternative bound.

\begin{proposition}
\label{prop3}
For any two quantum states $\rho$ and $\sigma$,
\begin{equation}
\label{eqnBound4}
    D(\rho,\sigma)^2\leq  D_{\HS}(\rho,\sigma)+\min \{ S_{\LL}(\rho),S_{\LL}(\sigma)\}\,.
\end{equation}
\end{proposition}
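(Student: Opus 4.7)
The plan is to mimic the proof of Proposition~\ref{prop2}, but instead of averaging the expressions coming from $\Tr(\Delta_+)^2$ and $\Tr(\Delta_-)^2$, I exploit the fact that $\Tr(\Delta_+) = \Tr(\Delta_-) = D(\rho,\sigma)$, so either quantity alone already equals $D(\rho,\sigma)^2$. This asymmetry is what lets us replace the arithmetic mean $\tfrac{1}{2}[S_\LL(\rho) + S_\LL(\sigma)]$ appearing in Proposition~\ref{prop2} by the minimum of the two entropies.

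Concretely, let $\{\delta_j^+\}$ and $\{\delta_j^-\}$ denote the eigenvalues of $\Delta_+$ and $\Delta_-$ in decreasing order, and let $\{r_j\}$, $\{s_j\}$ be the decreasing eigenvalues of $\rho$ and $\sigma$. First I would write
\begin{align}
D(\rho,\sigma)^2 &= \Tr(\Delta_+)^2 = \sum_{j,k}\delta^{+}_j\delta^{+}_k \notag\\
&= \sum_j(\delta^{+}_j)^2 + \sum_{k\neq j}\delta^{+}_j\delta^{+}_k.
\end{align}
Then I would bound the first sum by $\sum_j(\delta^{+}_j)^2 + \sum_j(\delta^{-}_j)^2 = D_{\HS}(\rho,\sigma)$, and the off-diagonal sum by invoking Weyl's inequality (as in Lemma~\ref{lemma1} and equation~\eqref{eqnProof122}) to get $\delta^{+}_j \leq r_j$. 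Because all $\delta^{+}_j$ and $r_j$ are non-negative, this yields $\sum_{k\neq j}\delta^{+}_j\delta^{+}_k \leq \sum_{k\neq j} r_j r_k = S_\LL(\rho)$, using the identity \eqref{eqnProof124}. Combining gives $D(\rho,\sigma)^2 \leq D_{\HS}(\rho,\sigma) + S_\LL(\rho)$.

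Next I would repeat the same argument starting from $D(\rho,\sigma)^2 = \Tr(\Delta_-)^2$, applying Weyl's inequality to $\sigma = \overline{\Delta} + \rho$ to obtain $\delta^{-}_j \leq s_j$, and conclude $D(\rho,\sigma)^2 \leq D_{\HS}(\rho,\sigma) + S_\LL(\sigma)$. Taking the minimum of the two bounds yields \eqref{eqnBound4}. There is no real obstacle here beyond carefully noting that the inequalities $\delta^{\pm}_j \leq r_j, s_j$ really do follow from Weyl as in the proof of Lemma~\ref{lemma1}; the key conceptual point — and the only place any work is done — is recognizing that one does not need to symmetrize between $\Delta_+$ and $\Delta_-$, which is what makes the minimum (rather than an average) appear.
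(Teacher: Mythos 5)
Your proposal is correct and follows essentially the same route as the paper's proof: expand $\Tr(\Delta_+)^2$ into diagonal and off-diagonal parts, bound the diagonal part by $D_{\HS}(\rho,\sigma)$ and the off-diagonal part by $S_{\LL}(\rho)$ via Weyl's inequality ($\delta_j^+\leq r_j$), then obtain the $S_{\LL}(\sigma)$ version by the symmetric argument with $\Delta_-$ and take the minimum. The paper simply invokes ``by symmetry'' where you spell out the $\Tr(\Delta_-)^2$ computation explicitly; the content is identical.
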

\begin{proof}
We will prove that
\begin{equation}
\label{eqnProof55}
    D(\rho,\sigma)^2\leq  D_{\HS}(\rho,\sigma)+ S_{\LL}(\rho)\,.
\end{equation}
Then by symmetry the roles of $\rho$ and $\sigma$ can be interchanged to give:
\begin{equation}
\label{eqnProof56}
    D(\rho,\sigma)^2\leq  D_{\HS}(\rho,\sigma)+ S_{\LL}(\sigma)\,.
\end{equation}
Taking the minimum of the two bounds in \eqref{eqnProof55} and \eqref{eqnProof56} then gives \eqref{eqnBound4}.

To prove \eqref{eqnProof55}, we follow a similar approach as the proof of Proposition~\ref{prop2}. Namely, we write
\begin{align}
    D(\rho,\sigma)^2 &= \Tr(\Delta_+)^2\\
    &=\sum_{jk}\delta^{+}_j\delta^{+}_k\\
    &=\sum_{j}(\delta^{+}_j)^2 +\sum_{k\neq j}\delta^{+}_j\delta^{+}_k\\
    &\leq D_{\HS}(\rho,\sigma)+\sum_{k\neq j}\delta^{+}_j\delta^{+}_k\,.
\end{align}
Then from \eqref{eqnProof122} we have 
\begin{align}
    D(\rho,\sigma)^2 
    &\leq D_{\HS}(\rho,\sigma)+\sum_{k\neq j}r_j r_k\,,
\end{align}
and inserting \eqref{eqnProof124} we obtain \eqref{eqnProof55}.
\end{proof}

Equation~\eqref{eqnBound4} does not have the factor of (1/2) in front of $D_{\HS}(\rho,\sigma)$ like \eqref{eqnBound3} does. On the other hand, it does reduce to \eqref{eqnPureRho} when $\rho$ is pure. 

There are cases in which \eqref{eqnBound3} is stronger than \eqref{eqnBound4}, and there are cases where the reverse is true. Hence, in general, one can take the minimum of the two bounds from \eqref{eqnBound3} and \eqref{eqnBound4} for our strongest entropy-based bound.

We do remark on the following issue. Both \eqref{eqnBound3} and \eqref{eqnBound4} are additive bounds, where the entropy is added to the Hilbert-Schmidt distance. This is in contrast to our rank-based bounds, which were multiplicative. Typically multiplicative bounds are more desirable than additive bounds, since one would like the bound to go to zero when the Hilbert-Schmidt distance goes to zero. For this reason, we propose that future work is needed to search for multiplicative entropy-based bounds, where the entropy term multiplies the Hilbert-Schmidt distance. The existence of such bounds remains an interesting open question \footnote{For example, the following bound would be natural conjecture:  $D(\rho,\sigma)^2 \leq D_{\HS}(\rho,\sigma)/ \Tr(\rho^2)$. However, we numerically found this bound to be violated.}.

\section{Conclusions}

Low-rank and low-entropy quantum states show up naturally in, for example, condensed matter physics, where bipartite cuts of condensed matter ground states lead to weakly entangled subsystems \cite{li2008}. Low-rank states also appear in data science, where the covariance matrix has only a small number of principal components due to redundant features \cite{bishop2006pattern, murphy2012machine}. These applications motivate the work in this article, where we gave improved upper bounds on the trace distance for low-rank or low-entropy states. We focused on upper bounds involving the Hilbert-Schmidt distance, since this quantity can be efficiently computed on a quantum computer. Hence, our bounds may benefit the field of quantum algorithms, by making it easier for quantum algorithms to provide a tight bound on the operationally relevant trace distance.

Our main result, Theorem~\ref{thm1}, gave a bound involving the reduced rank (a quantity analogous to the reduced mass in physics). This bound was stronger than what one could obtain directly from the equivalence of the Frobenius and trace norms. We also gave bounds involving the linear entropy in Propositions~\ref{prop2} and \ref{prop3}, where the linear entropy appeared as an additive term in the bound. An open question is whether multiplicative entropy-based bounds exist, and we believe this an important direction for future work.

Finally, as noted in the Introduction, deciding whether the trace distance is large or small 
is QSZK-complete, and hence is believed to be difficult for a quantum computer. Nevertheless, this does not preclude the possiblity that the trace distance for a special class of states might be efficiently estimated with a quantum computer. Because of our results, we speculate that this might be true for low-rank states, although we leave a formal proof for future work. If true, this would suggest that estimating the trace distance for some states lies outside of BQP, while for other states it lies inside of BQP. 

\section{Acknowledgements}

We thank Rolando Somma and Marco Tomamichel for helpful conversations. All authors acknowledge support from the LDRD program at Los Alamos National Laboratory (LANL). PJC additionally acknowledges support from the LANL ASC Beyond Moore's Law project. MC was also supported by the Center for Nonlinear Studies at LANL. LC was also supported by the DOE through the J. Robert Oppenheimer fellowship.


%

\end{document}